\documentclass[sigconf,nonacm]{acmart}

\usepackage{pifont}
\usepackage{algorithm}
\usepackage{algpseudocode}
\usepackage{booktabs}
\usepackage{multirow}
\usepackage{amsmath}
\usepackage{graphicx}
\usepackage{xcolor}
\usepackage{enumitem}
\usepackage{microtype}
\usepackage{url}
\usepackage{subcaption}
\newcommand{\benchYes}{\textcolor{green!55!black}{\ding{51}}}
\newcommand{\benchNo}{\textcolor{red!75!black}{\ding{55}}}

\newcommand{\abYes}{\ding{51}}
\newcommand{\abNo}{\ding{55}}
\setcopyright{none}
\begin{document}

\title{BrowseFlow: Sparse Event Flow Matching for Generative User Simulation}

\author{Yitian Pang}
\authornote{These authors contributed equally to this work.}
\email{pangyt23@mails.tsinghua.edu.cn}
\affiliation{%
    \department{Department of Automation}
    \institution{Tsinghua University}
    \city{Beijing}
    \country{China}
}

\author{Jiayi Zhang}
\authornotemark[1]
\email{luohan.zjy@alibaba-inc.com}
\affiliation{%
    \institution{Taobao \& Tmall Group of Alibaba}
    \city{Beijing}
    \country{China}
}

\author{Zhenjia Ma}
\email{mazhenjia.mzj@alibaba-inc.com}
\affiliation{%
    \institution{Taobao \& Tmall Group of Alibaba}
    \city{Beijing}
    \country{China}
}

\author{Xiaowei Yue}
\authornote{Corresponding authors.}
\email{yuex@tsinghua.edu.cn}
\affiliation{%
    \department{Department of Industrial Engineering}
    \institution{Tsinghua University}
    \city{Beijing}
    \country{China}
}

\author{Dagui Chen}
\authornotemark[2]
\email{dagui.cdg@alibaba-inc.com}
\affiliation{%
    \institution{Taobao \& Tmall Group of Alibaba}
    \city{Beijing}
    \country{China}
}

\author{Jian Xu}
\email{xiyu.xj@alibaba-inc.com}
\affiliation{%
    \institution{Taobao \& Tmall Group of Alibaba}
    \city{Beijing}
    \country{China}
}

\renewcommand{\shortauthors}{Pang et al.}

\begin{abstract}
User simulation aims to reproduce how users interact with content and
products across online platforms. Recent generative approaches have
enriched user simulation with behavioral history and contextual
information. However, realistic browsing simulation still faces
two challenges.
(i) \textbf{Stochastic Responses}: 
Even with the same recommendation feed, the same user
may produce different browsing responses, yielding
distinct behavioral trajectories. A logged trajectory records only one such
realization. Reference matching alone cannot establish whether other
plausible browsing responses are assigned appropriate probabilities.
(ii) \textbf{Sparse Key Events}: Clicks and likes provide important
feedback but occur sparsely within browsing trajectories dominated
by progression actions such as scrolling. Many stepwise simulators
predict these actions in a dense sequence, where key events contribute
relatively few training targets.
To address these challenges, we propose \textbf{BrowseFlow}, a conditional
discrete flow matching method for generative user simulation.
For Sparse Key Events, BrowseFlow represents each response
by its browsing extent and an ordered sequence of key events,
directly modeling their counts and locations under exposure constraints.
BrowseFlow learns the conditional distribution of complete browsing
responses through probabilistic extent prediction and discrete flow
matching over key events. Random event retention defines
probability paths over partial event sequences, while transition rate
matching learns stochastic insertion dynamics.
For Stochastic Responses, we use the energy score, a proper
scoring rule, to evaluate response fidelity and generated
variability, complemented by the Wasserstein distance between
key event count distributions.
We further introduce \textbf{TaobaoBrowse}, a benchmark pairing real
recommendation contexts with fine-grained browsing trajectories.
Experiments across three datasets demonstrate that BrowseFlow
consistently achieves the best distributional performance.
\end{abstract}

\ccsdesc[500]{Information systems~Recommender systems}

\keywords{
User Simulation,
Recommender Systems,
Discrete Flow Matching
}

\maketitle

\begingroup
\raggedbottom
\setlength{\textfloatsep}{7pt plus 2pt minus 2pt}
\setlength{\dbltextfloatsep}{7pt plus 2pt minus 2pt}


\section{Introduction}
\label{sec:intro}

User simulation, with the aim of reproducing how users interact with
content and products across online platforms, has been widely explored
in e-commerce, short-video, news, and other interactive services.
Users engage in diverse activities such as scrolling through feeds,
clicking on items, and liking content, forming behavioral trajectories
through successive interactions. Modeling such trajectories is a
longstanding research problem in understanding and simulating user
behavior~\cite{balog2023user}.
Early approaches primarily focused on modeling user feedback over
ranked items. Classical click models characterize examination and
click decisions~\cite{dupret2008user,chapelle2009dynamic}.
Simulation frameworks such as RecSim and RecSim NG broadened
the scope to sequential user interactions, providing environments
for modeling user behavior over time
~\cite{ie2019recsim,mladenov2021recsim}.

More recently, generative approaches have used stepwise
generation to model user behavior with richer contextual detail.
Data-driven simulators such as Virtual-Taobao, UserSim, and
KuaiSim learn patterns of browsing, feedback, and session
continuation from interaction logs
~\cite{shi2019virtual,zhao2021usersim,zhao2023kuaisim}.
Large language models incorporate user profiles, memory, and
preference reasoning into behavioral
simulation~\cite{wang2025user,zhang2024generative,zhang2025llm}.
Vision-language models ground simulated interactions in
webpage observations~\cite{zhang2025see}, while
RecVerse~\cite{tang2026towards} combines hierarchical memory
with trajectory-level optimization for sequential shopping
simulation. These developments have enabled richer
representations of browsing responses across diverse user
and recommendation contexts. Despite these advances,
realistic browsing simulation still faces two challenges.

\begin{figure}[t]
    \centering
    \includegraphics[width=\columnwidth]{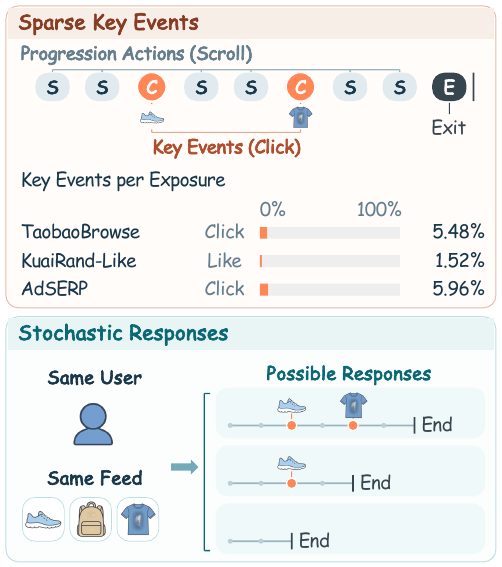}

    \caption{\textbf{Motivation of BrowseFlow.}
    The upper panel distinguishes frequent progression
    actions from sparse key events, highlighting the
    imbalance between browsing progression and meaningful
    feedback across three datasets. The lower panel illustrates stochastic responses,
    where the same user may produce different browsing
    trajectories under an identical recommendation feed.}

    \Description{
    The upper panel illustrates a browsing action sequence
    dominated by scrolling, with sparse clicks on items.
    A bar chart reports the proportions of key events
    across TaobaoBrowse, KuaiRand-Like, and AdSERP,
    showing positive feedback rates of 5.48 percent,
    1.52 percent, and 5.96 percent, respectively.
    The lower panel shows the same user and recommendation
    feed leading to three possible browsing responses with
    different clicked items and termination depths.
    }

    \label{fig:motivation}
\end{figure}

\textbf{Stochastic Responses.}
User browsing behavior is inherently stochastic.
As illustrated in Fig.~\ref{fig:motivation}, the same
user presented with the same recommendation feed may
produce different behavioral trajectories.
This variability is consistent with random utility
theory~\cite{mcfadden1972conditional} and recent choice
modeling for recommendation~\cite{krause2025non}.
Each logged trajectory captures only one
realization of the conditional response distribution.
Therefore, faithful user simulation aims to model this
distribution rather than reproduce individual logged
trajectories. At the evaluation level, existing methods primarily measure agreement between generated responses and the logged trajectory. Such reference matching metrics emphasize reproducing a single observed realization, providing limited insight into whether the model captures other plausible browsing responses.

\textbf{Sparse Key Events.}
Browsing trajectories exhibit a highly imbalanced action
distribution. Progression actions, such as scrolling and
swiping, advance content exposure and constitute most
browsing actions. In contrast, key events such as clicks
and likes reflect user engagement with specific items,
providing valuable signals for downstream preference
modeling and recommendation optimization.
However, these events occur sparsely: positive feedback
accounts for only 1.52\%--5.96\% of valid exposures
across our three datasets (Fig.~\ref{fig:motivation}).
Prior studies of browsing continuation~\cite{ji2022intra}
and probabilistic scrolling budgets~\cite{jeunen2023probabilistic}
also distinguish exposure progression from engagement.
Many stepwise simulators~\cite{shi2019virtual,zhao2021usersim,zhao2023kuaisim} model both types of
actions within the same dense sequence, where frequent
progression actions dominate prediction targets.
Consequently, key event occurrence and placement receive
comparatively limited direct supervision.
During generation, errors in progression decisions may
alter subsequent exposure states, while premature
termination can prevent later key events from occurring.

To address these challenges, we propose \textbf{BrowseFlow}, a conditional
discrete flow matching method for generative user simulation.
\textbf{For Sparse Key Events}, BrowseFlow represents each response by its
browsing extent and an ordered sequence of key events.
The extent compactly summarizes frequent progression actions, while
the event sequence makes sparse key events the focus of generation.
BrowseFlow models the conditional distribution of complete browsing
responses through probabilistic extent prediction and discrete flow
matching over key event sequences. Random event retention
defines probability paths over partial sequences, while transition
rate matching learns stochastic insertion dynamics.
Conditioned on the predicted extent, the insertion process directly
generates variable event counts and locations under exposure constraints.
\textbf{For Stochastic Responses}, we use the energy
score~\cite{gneiting2007strictly}, a proper scoring rule for predictive
distributions, to evaluate both fidelity to observed responses and
variation across repeated generations. The Wasserstein distance
between key event count distributions complements this evaluation
by measuring aggregate event count alignment.
We further introduce \textbf{TaobaoBrowse}, a benchmark pairing real
recommendation contexts with fine-grained trajectories to support
training and evaluation of browsing responses.

To summarize, our contributions are threefold:
\begin{itemize}
    \item \textbf{Discrete Flow Matching for User Simulation.}
    We introduce conditional discrete flow matching for generative
    user simulation, representing complete browsing responses through
    browsing extent and sparse key events. This formulation captures
    stochastic variation in browsing responses while directly modeling
    key event counts and locations under exposure constraints.

    \item \textbf{Distributional Evaluation of Simulated Responses.}
    We apply the energy score to assess how well simulated
    conditional response distributions agree with observed
    user behavior, considering both observation fidelity
    and generated variability.
    The Wasserstein distance between key event count distributions
    complements this evaluation by measuring aggregate
    event count alignment.

    \item \textbf{The TaobaoBrowse Benchmark.}
    We introduce TaobaoBrowse, a real-world benchmark that preserves
    recommendation sessions, fine-grained browsing behavior, and
    available user and item context for user browsing simulation.
\end{itemize}



\section{Related Work}
\label{sec:related}

\subsection{Generative Recommendation}

Generative recommendation extends conventional candidate scoring
to conditional generation of recommendation outputs.
Autoregressive methods generate item identifiers or semantic
IDs~\cite{petrov2023generative,rajput2023recommender},
while diffusion methods model item
representations~\cite{li2023diffurec,yang2023generate}.
Flow matching has been applied to collaborative filtering and
sequential recommendation~\cite{liu2025flow,liu2025flow_fmrec,li2026flowrec},
and OneRec extends generation to complete recommendation
sessions~\cite{deng2025onerec}.
These approaches focus on generating the recommendations
presented to users. Generative user simulation studies a
complementary task: generating user behavior conditioned on
the presented recommendation context.
BrowseFlow focuses on complete browsing responses,
modeling browsing progression and key events as a
structured conditional generation problem under
exposure constraints.

\subsection{User Simulation and Evaluation}

User simulation has evolved from local feedback prediction
toward generating complete behavioral trajectories.
Classical click models and neural predictors capture examination,
click dependencies, and evolving user
interests~\cite{dupret2008user,chapelle2009dynamic,borisov2016neural,zhou2019deep}.
Simulation platforms support sequential
interactions~\cite{ie2019recsim,mladenov2021recsim},
while learned simulators model browsing behavior and
multiple feedback signals~\cite{zhao2021usersim,zhao2023kuaisim}.
Agent approaches incorporate profiles, memory, and Web
observations~\cite{zhang2024generative,zhang2025see}.
RecVerse introduces trajectory optimization for shopping
simulation~\cite{tang2026towards}, while LPDO models dependencies
among future interactions through listwise
diffusion~\cite{huang2026listwise}.
Evaluating stochastic responses also requires distributional
measures. Proper scoring rules assess predictive
distributions~\cite{gneiting2007strictly,alexander2024evaluating},
including complete trajectories~\cite{golestaneh2016generation}.
Reference matching and best-of-$N$ metrics emphasize individual
matches rather than the full generated
distribution~\cite{shahroudi2024evaluation}.
We use the energy score to evaluate conditional browsing
response distributions, accounting for both response fidelity
and variation across generated trajectories.

\subsection{Flow Matching for Structured Sequences}

Insertion and editing methods enable flexible sequence
generation~\cite{stern2019insertion,gu2019levenshtein}.
Temporal point process models capture event dependencies
through history-dependent
intensities~\cite{du2016recurrent,mei2017neural},
while Add and Thin introduces diffusion-based event
generation~\cite{ludke2023add}.
Flow matching learns generative dynamics along probability
paths~\cite{lipman2022flow}.
Extensions to discrete state spaces formulate these dynamics
through continuous-time Markov
processes~\cite{campbell2024generative,gat2024discrete},
with related work on discrete denoising and distribution
ratios~\cite{campbell2022continuous,lou2023discrete}.
General discrete paths support flexible probability path
constructions~\cite{shaul2025flow}, while Edit Flows extends
discrete flow matching to variable-length sequences through
learned edit operations~\cite{havasi2026edit}.
Building on discrete flow matching, BrowseFlow defines
conditional insertion dynamics over key event sequences.
Its generation process models stochastic event counts and
locations within the predicted browsing extent while
enforcing exposure constraints.


\section{Method}
\label{sec:method}

\subsection{Problem Formulation}
\label{sec:problem}

Generative user simulation aims to generate realistic sequences of user
actions conditioned on observed user information and interaction context.
We study this task for a fixed recommendation session
$S=(v_1,\ldots,v_N)$. The observed context is $X=(h,u,S,f)$, where $h$
denotes the interaction history, $u$ contains available user features,
and $f$ is the current browsing frontier. The complete recommendation
session is known before generation, while subsequent user behavior
remains unknown.

A browsing response $\tau=(a_1,\ldots,a_T)$ consists of three behavioral
components. Progression actions, such as scrolling, advance content
exposure. Key events, such as clicks and likes, reflect user engagement
with specific items. Session termination marks the end of browsing.
The objective is to generate a complete behavioral response conditioned
on the observed context:
\begin{equation}
    \hat{\tau}=(\hat{a}_1,\ldots,\hat{a}_{\hat{T}})
    \sim p_\Theta(\cdot\mid X),
    \label{eq:simulation_objective}
\end{equation}
where $p_\Theta$ denotes the learned generative model. Both the action
sequence and its length are determined through generation, allowing
different browsing responses under the same context.

\subsection{Sparse Response Representation}
\label{sec:sparse_representation}

To focus generation on sparse key events, BrowseFlow represents browsing
progression compactly while modeling key events explicitly. Let
$D\in\{f,\ldots,N\}$ denote the final browsing extent. The ordered key
event sequence is $\mathcal E=(e_1,\ldots,e_M)$, where
$e_k=(d_k,\mu_k)$ records the browsing depth and mark of event $k$.
We represent a complete response by
\begin{equation}
    \mathcal R_X(\tau)=(D,\mathcal E).
    \label{eq:structured_response}
\end{equation}
The event sequence may be empty. Otherwise,
$f\leq d_1\leq\cdots\leq d_M\leq D$, and sequence order distinguishes
events at the same depth. In the click setting, $\mu_k$ identifies an
interacted item. A target must be available by depth $d_k$ and cannot
repeat within the same session.

The representation separates browsing progression from key event
generation. Given $D$ and $\mathcal E$, a deterministic decoder
reconstructs the progression actions between event depths and terminates
browsing at $D$. Thus, progression actions are represented through the
response structure, while sparse key events become the primary
generative variables.

This representation leads to an extent-first factorization:
\begin{equation}
    p_\Theta(D,\mathcal E\mid X)
    =\pi_\eta(D\mid X)\,p_\theta(\mathcal E\mid D,X_D),
    \label{eq:browseflow_factorization}
\end{equation}
where $X_D=(h,u,S_{\leq D},f)$ denotes the context restricted to the
selected browsing extent. The first factor models the distribution
of browsing extent, while the second models the conditional
distribution of key events within that extent. Together, they define
a stochastic generative model over complete browsing responses.
Figure~\ref{fig:pipeline} summarizes the overall framework.

\begin{figure*}[t]
    \centering
    \IfFileExists{fig/pipeline.pdf}{%
        \includegraphics[width=\textwidth]{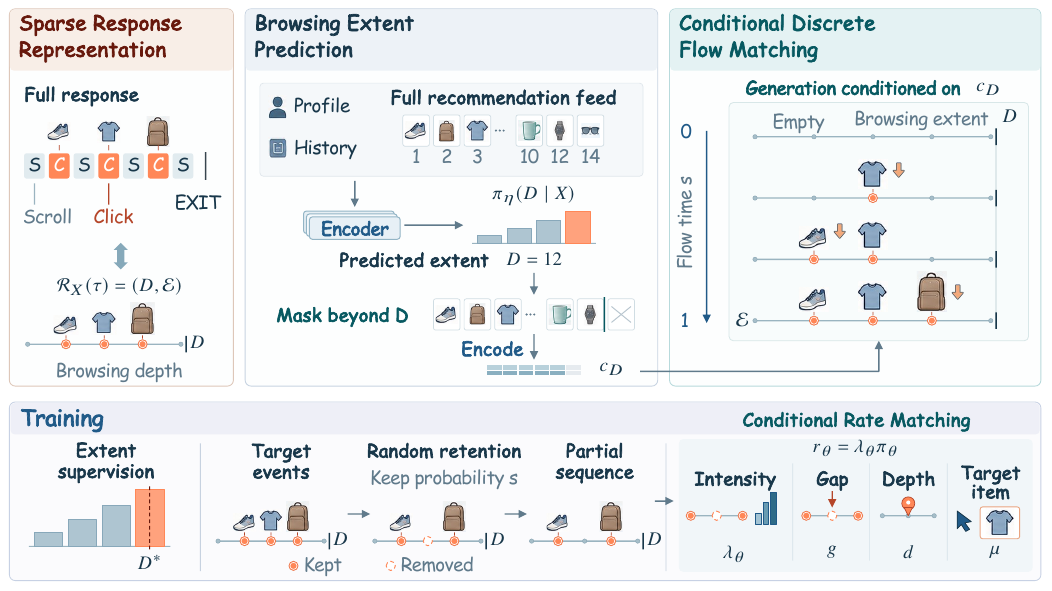}%
    }{%
        \fbox{\parbox[c][1.12in][c]{0.94\textwidth}{\centering
            Complete session + user context $\rightarrow$ extent prediction
            $\rightarrow$ extent-restricted encoding\\[5pt]
            Empty event sequence $\rightarrow$ conditional discrete Flow Matching
            $\rightarrow$ ordered key events $\rightarrow$ response reconstruction\\[5pt]
            Training: observed extent + random event retention
            $\rightarrow$ target insertion rates}}
    }

    \caption{Overview of BrowseFlow.
    The upper panels illustrate the sparse response
    representation, browsing extent prediction, and
    conditional discrete Flow Matching.
    Browsing progression is represented by its extent,
    while key events are generated through stochastic
    insertions under exposure constraints.
    The resulting extent and event sequence determine
    a complete browsing response.
    The lower panel summarizes training, where observed
    extents supervise extent prediction and random event
    retention provides insertion targets for conditional
    rate matching.}

    \Description{A two-stage generative framework. The recommendation
    session and user context are used to predict browsing extent.
    The selected extent conditions a discrete insertion flow
    that generates an ordered sequence of key events. A deterministic
    decoder reconstructs the complete browsing response.
    Training uses observed browsing extents and randomly retained
    event subsequences.}

    \label{fig:pipeline}
\end{figure*}

\subsection{Browsing Extent Prediction}
\label{sec:extent}

BrowseFlow first models the distribution of final browsing extent
conditioned on the complete recommendation session. A session Transformer
encodes all item attributes with position information, while a history
Transformer encodes the observed user interactions. Their representations
are combined with user features to score each reachable extent.
The resulting categorical distribution is
\begin{equation}
    \pi_\eta(D=d\mid X)
    =\frac{\exp\alpha_\eta(d;X)}
    {\sum_{r=f}^{N}\exp\alpha_\eta(r;X)},
    \qquad d\in\{f,\ldots,N\}.
    \label{eq:extent_prediction}
\end{equation}
Here $\alpha_\eta(d;X)$ is the learned extent score. The full
recommendation session is visible to this branch, allowing different
browsing extents to be sampled under the same context.

After sampling $D$, BrowseFlow performs a second encoding pass for key
event generation. Item attributes beyond $D$ are masked before
contextual encoding, while list positions remain available.
The extent and event stages share encoder parameters but do not reuse
contextual activations. We denote the resulting item representations
by $P_j^{(D)}$ and the cached event context, including these
representations, by $c_D$. Both remain fixed throughout the
subsequent flow. The sampled extent therefore determines the browsing
region and provides the conditioning context for key event generation.

\subsection{Extent-Conditioned Key Event Flow}
\label{sec:event_flow}

Given the browsing extent, BrowseFlow models the conditional
distribution of key events through a continuous-time Markov chain
(CTMC) over ordered partial event sequences, following discrete
Flow Matching for variable-length sequence
generation~\cite{gat2024discrete,havasi2026edit}.
The state at flow time $s\in[0,1]$ is a partial event sequence $A_s$,
initialized at $A_0=\varnothing$. Flow time measures generation
progress rather than physical browsing time.

For a state $A=(\tilde e_1,\ldots,\tilde e_m)$, an operation
$z=(g,d,\mu)$ inserts a key event with depth $d$ and mark $\mu$
after the first $g$ events. Let $\tilde d_0=f$ and
$\tilde d_{m+1}=D$, and let $\mathcal V_A(d)$ denote the valid
unused event marks at depth $d$. The admissible insertion space is
\begin{equation}
    \Omega(A,D)
    =\{(g,d,\mu):
    0\leq g\leq m,\
    \tilde d_g\leq d\leq\tilde d_{g+1},\
    \mu\in\mathcal V_A(d)\}.
    \label{eq:insertion_support}
\end{equation}
The neighboring depths preserve event order. In the click setting,
$\mathcal V_A(d)$ contains previously unused targets available by
depth $d$. Structural validity is therefore enforced by the
transition support.

A Transformer encodes the current partial sequence using the
extent-restricted item representations, event depths, and sequence
positions. This provides partial-sequence conditioning.
An embedding of $D$ provides extent conditioning, while a flow-time
embedding specifies the current stage of generation.
For compactness, let $\xi=(A,s,D,c_D)$ denote the shared
conditioning. The transition rate of an admissible insertion is
\begin{equation}
    r_\theta(z\mid\xi)
    =\lambda_\theta(\xi)\,\pi_\theta(z\mid\xi),
    \qquad z\in\Omega(A,D).
    \label{eq:transition_rate}
\end{equation}
The nonnegative intensity $\lambda_\theta$ determines the total
insertion rate, while the normalized distribution $\pi_\theta$
allocates that rate across valid operations. We factorize the
operation distribution as
\begin{equation}
    \pi_\theta(g,d,\mu\mid\xi)
    =\pi_\theta^g(g\mid\xi)\,
    \pi_\theta^d(d\mid g,\xi)\,
    \pi_\theta^\mu(\mu\mid g,d,\xi).
    \label{eq:operation_factorization}
\end{equation}
The three factors select an insertion gap, a browsing depth within
that gap, and an event mark. Each factor is normalized over choices
that admit a valid operation in $\Omega(A,D)$.

Conditional on the current state, an admissible operation $z$
occurs during a short interval $\Delta s$ with probability
$r_\theta(z\mid\xi)\Delta s+o(\Delta s)$. Starting from the
empty sequence, the CTMC generates a variable number of key
events without a separate event count predictor. Their locations
and marks are sampled through the insertion distribution.
At $s=1$, the resulting sequence defines a sample from
$p_\theta(\mathcal E\mid D,X_D)$. The sampled extent and
generated events are then decoded into a complete browsing
response. Each response is generated using one extent sample
and one event flow, without replanning during generation.

\subsection{Conditional Flow Matching}
\label{sec:training}

To learn the stochastic event generation process, BrowseFlow
constructs conditional probability paths from the empty sequence
to observed key event sequences. Averaging these paths over the
data distribution yields a marginal path whose dynamics can be
learned from individual logged responses.

For a training response, let $D^*$ denote the observed extent and
$\mathcal E^*=(e_1^*,\ldots,e_M^*)$ the observed key event sequence.
The event branch is conditioned on the context encoded at $D^*$.
In the following derivation, we fix $(D,X_D)$ and suppress this
conditioning, writing $p_\star(\mathcal E)$ for the conditional
data distribution.

At flow time $s\in[0,1]$, each target event is independently
retained with probability $s$, preserving its original order.
For any ordered subsequence $A\preceq\mathcal E^*$,
\begin{equation}
    p_s(A\mid\mathcal E^*)
    =s^{|A|}(1-s)^{M-|A|}.
    \label{eq:conditional_path}
\end{equation}
All other sequences have zero probability. The endpoints satisfy
$p_0(\cdot\mid\mathcal E^*)=\delta_{\varnothing}$ and
$p_1(\cdot\mid\mathcal E^*)=\delta_{\mathcal E^*}$.
The path progressively reveals the target events without imposing
a left-to-right generation order.

This path admits a continuous-time insertion interpretation.
Assign each target event an independent activation time
$U_k\sim\mathcal U(0,1)$, retaining it when $U_k\leq s$.
For an event still absent at time $s$,
\begin{equation}
    \Pr(s<U_k\leq s+\Delta\mid U_k>s)
    =\frac{\Delta}{1-s}.
    \label{eq:activation_hazard}
\end{equation}
Thus, each missing event has conditional insertion hazard
$1/(1-s)$. Let $\mathcal M(A,\mathcal E^*)$ denote the missing
target indices. Each missing event $k$ corresponds to a unique
insertion $z_k^*(A)$ into the current partial sequence, with
conditional rate
\begin{equation}
    r_s^*\!\left(z_k^*(A)\mid A,\mathcal E^*\right)
    =\frac{1}{1-s},
    \qquad k\in\mathcal M(A,\mathcal E^*).
    \label{eq:conditional_rate}
\end{equation}
All other conditional insertion rates are zero. Writing
$R=|\mathcal M(A,\mathcal E^*)|$, the total conditional intensity
is $R/(1-s)$.

Each observed sequence defines a conditional flow. Their mixture
induces the marginal probability path
\begin{equation}
    p_s(A)=\mathbb E_{\mathcal E^*\sim p_\star}
    \!\left[p_s(A\mid\mathcal E^*)\right].
    \label{eq:marginal_path}
\end{equation}
At states of positive marginal probability, the corresponding
transition rates are posterior averages of the conditional rates:
\begin{equation}
    \bar r_s(z\mid A)=
    \mathbb E\!\left[
        r_s^*(z\mid A,\mathcal E^*)\mid A,s
    \right].
    \label{eq:marginal_rate}
\end{equation}
The posterior is induced by
$p_s(A\mid\mathcal E^*)p_\star(\mathcal E^*)$.
Averaging the conditional Kolmogorov forward equations establishes
that $\bar r_s$ generates the marginal path
$p_s$~\cite{gat2024discrete,havasi2026edit}.
This connects supervision from individual observed responses
with the dynamics of the conditional event distribution.

We learn the transition field using the generalized
Kullback--Leibler divergence $\mathcal K$ between nonnegative
rate vectors, defined by
$\mathcal K(a\Vert b)=\sum_z[a_z\log(a_z/b_z)-a_z+b_z]$.
Since $\bar r_s$ is the posterior mean of the conditional rates,
\begin{equation}
    \mathbb E\!\left[
        \mathcal K(r_s^*\Vert r_\theta)\mid A,s
    \right]
    =\mathcal K(\bar r_s\Vert r_\theta)+C_s(A),
    \label{eq:rate_matching_equivalence}
\end{equation}
where $C_s(A)$ is independent of the model parameters.
Conditional rate matching thus optimizes the same
parameter-dependent objective as matching the marginal rates.

For a sampled partial state $A_s$, let
$\mathcal M=\mathcal M(A_s,\mathcal E^*)$,
$R=|\mathcal M|$, and $z_k^*=z_k^*(A_s)$.
Substituting Eq.~\eqref{eq:conditional_rate} and the factorization
$r_\theta=\lambda_\theta\pi_\theta$ into
$(1-s)\mathcal K$, and dropping model-independent terms, gives
\begin{equation}
    \mathcal L_{\mathrm{flow}}
    =(1-s)\lambda_\theta
    -R\log\lambda_\theta
    -\sum_{k\in\mathcal M}\log\pi_\theta(z_k^*).
    \label{eq:flow_loss}
\end{equation}
The first two terms fit the total insertion intensity, while
the last allocates that intensity across valid insertion operations.
The factor $(1-s)$ removes the singular target-rate scale near
$s=1$ without changing the unrestricted pointwise optimum.

Combining extent prediction with conditional rate matching yields
\begin{equation}
    \mathcal L(\Theta)=
    \mathbb E\!\left[
        -\log\pi_\eta(D^*\mid X)
        +\mathcal L_{\mathrm{flow}}
    \right].
    \label{eq:training_objective}
\end{equation}
The expectation is over logged responses, uniformly sampled
flow times, and partial event sequences drawn from
Eq.~\eqref{eq:conditional_path}. The objective jointly learns
uncertainty in browsing extent and the conditional distribution
of key events.

The marginal intensity further characterizes variable event counts.
Let $\bar\lambda_s(A)=\sum_z\bar r_s(z\mid A)$.
Summing the conditional insertion rates gives
\begin{equation}
    (1-s)\bar\lambda_s(A)=\mathbb E[R\mid A,s].
    \label{eq:intensity_interpretation}
\end{equation}
At the unrestricted population optimum, the scaled intensity
estimates the expected number of missing events. Event cardinality
is therefore learned through the insertion process rather than
predicted by a separate count head.

Algorithm~\ref{alg:browseflow_train} summarizes the training
procedure. Each observed response provides a randomly retained
partial sequence and its missing event insertion targets.
The resulting stochastic loss jointly supervises browsing extent
and conditional insertion rates.

\begin{algorithm}[t]
\caption{Conditional Rate Matching Training}
\label{alg:browseflow_train}
\begin{algorithmic}[1]
\Require Context $X$, observed response $\tau^*$
\Ensure Stochastic training loss $\ell$

\State $(D^*,\mathcal E^*)\gets\mathcal R_X(\tau^*)$
\State $c_{D^*}\gets\textsc{EventEncode}(X,D^*)$
\State $s\sim\mathcal U(0,1)$
\State $A_s\gets\textsc{Retain}(\mathcal E^*,s)$
\State $\mathcal M\gets\mathcal M(A_s,\mathcal E^*)$
\State $R\gets|\mathcal M|$
\State $\xi\gets(A_s,s,D^*,c_{D^*})$
\State $\lambda\gets\lambda_\theta(\xi)$
\State $\ell_{\mathrm{flow}}\gets
    (1-s)\lambda-R\log\lambda$

\For{$k\in\mathcal M$}
    \State $z_k^*\gets
        \textsc{TargetOp}(k,A_s,\mathcal E^*)$
    \State $\ell_{\mathrm{flow}}\gets
        \ell_{\mathrm{flow}}
        -\log\pi_\theta(z_k^*\mid\xi)$
\EndFor

\State $\ell\gets-\log\pi_\eta(D^*\mid X)
    +\ell_{\mathrm{flow}}$
\State \Return $\ell$
\end{algorithmic}
\end{algorithm}

\section{Experiments}
\subsection{Experimental Setup}
\label{sec:exp_setup}

\textbf{Datasets and tasks.}
We evaluate BrowseFlow on three datasets covering complete browsing
response generation and key event generation under fixed exposures.
\textbf{TaobaoBrowse} is constructed from real Taobao recommendation
logs and serves as our primary benchmark. It preserves complete
browsing trajectories, including progression actions, key events,
and session termination, together with user and item context.
\textbf{KuaiRand-Like}~\cite{gao2022kuairand} uses likes as key
events over fixed observed exposure sequences.
\textbf{AdSERP}~\cite{latifzadeh2025versatile} records fine-grained
interactions with search result pages and is adapted to click event
generation over fixed exposures. Table~\ref{tab:dataset_stats}
summarizes the dataset statistics.

\begin{table}[t]
    \centering
    \caption{Dataset statistics. Key rate denotes the fraction
    of valid exposures with positive key events across all splits.
    Key events correspond to clicks on TaobaoBrowse and AdSERP,
    and likes on KuaiRand-Like.}
    \label{tab:dataset_stats}
    \normalsize
    \setlength{\tabcolsep}{3pt}
    \renewcommand{\arraystretch}{1.08}
    \begin{tabular}{@{}lrrrr@{}}
        \toprule
        Dataset & Train & Val. & Test & Key rate \\
        \midrule
        TaobaoBrowse
        & 100K & 10K & 20K & 5.48\% \\
        KuaiRand-Like
        & 100K & 5K & 10K & 1.52\% \\
        AdSERP
        & 1,613 & 355 & 338 & 5.96\% \\
        \bottomrule
    \end{tabular}
\end{table}

\textbf{Baselines.}
We compare BrowseFlow with six representative approaches to user
response modeling. UBM~\cite{dupret2008user} and
DBN~\cite{chapelle2009dynamic} model examination and interaction
over ranked items. RL4RS-DIEN~\cite{wang2023rl4rs,zhou2019deep}
captures evolving user interests, while UserSim~\cite{zhao2021usersim}
and KuaiSim~\cite{zhao2023kuaisim} model sequential user behavior.
We also include CaseRec~\cite{zhao2025improving}, which uses a
Transformer to predict feedback over exposure sequences.

\textit{Energy Score (ES).}
We use the energy score~\cite{gneiting2007strictly}, a proper
scoring rule for predictive distributions, to evaluate stochastic
user responses. For context $i$, let $y_i$ denote the observed
response and $x_{i1},\ldots,x_{iK}$ denote independent generated
samples. The estimated energy score is
\begin{equation}
    \widehat{\mathrm{ES}}_i
    =\frac{1}{K}\sum_k d(x_{ik},y_i)
    -\frac{1}{K(K-1)}\sum_{k<l}d(x_{ik},x_{il}).
    \label{eq:metric_es}
\end{equation}
The reported ES averages across evaluation contexts.
The first term measures fidelity to observed responses,
while the second accounts for variation among generated responses.

For TaobaoBrowse, each response suffix is encoded as a fixed-length
sequence $\xi(\tau)$ containing action tokens, click targets,
padding, and a terminal coordinate. We use the square root
of normalized token mismatch:
\begin{equation}
    d_{\mathrm{TB}}(\tau,\tau')
    =
    \sqrt{
        \frac{1}{B}
        \sum_{r=1}^{B}
        \mathbf{1}\{\xi_r(\tau)\neq\xi_r(\tau')\}
    },
    \label{eq:distance_taobao}
\end{equation}
where $B$ is the number of encoded coordinates.
For KuaiRand-Like and AdSERP, responses are binary key event
vectors over the same valid exposure sequence. Their distance is
\begin{equation}
    d_{\mathrm{bin}}(x,y)
    =
    \sqrt{
        \frac{1}{L}
        \sum_{j=1}^{L}(x_j-y_j)^2
    },
    \label{eq:distance_binary}
\end{equation}
where $L$ denotes the number of valid exposures.
Lower ES indicates better agreement with the observed
response distribution.

\textit{Count W1.}
For one-dimensional distributions, the Wasserstein distance
$W_1$ equals the integral of the absolute difference
between their cumulative distribution functions.
Let $F_{\mathrm{obs}}$ and $F_{\mathrm{gen}}$ denote the
empirical cumulative distributions of observed and generated
key event counts. Since key event counts are nonnegative
integers, their CDFs are constant between consecutive
integers, yielding
\begin{equation}
\begin{aligned}
    \mathrm{Count\ W1}
    &=\int_0^\infty
    |F_{\mathrm{obs}}(t)-F_{\mathrm{gen}}(t)|\,\mathrm{d}t\\
    &=\sum_{k=0}^{\infty}
    |F_{\mathrm{obs}}(k)-F_{\mathrm{gen}}(k)|.
\end{aligned}
\label{eq:metric_count_w1}
\end{equation}
Count W1 measures the agreement of aggregate key event
count distributions across evaluation contexts, with
lower values indicating better alignment. Additionally, we report AUC for local response discrimination,
perplexity (PPL) for probabilistic fit, and SeqHit@1 for exact
response matching.

\textbf{Implementation.}
BrowseFlow is implemented in PyTorch 2.10.0.
All experiments were conducted using ZW810E (96\,GB) PPUs. The model uses 128-dimensional hidden states
and 2, 2, and 4 Transformer layers for user history,
session context, and partial key event encoding, respectively.
We optimize using AdamW with an effective batch size of 128,
weight decay of 0.01, and a warmup phase covering the first
5\% of training updates. Learning rates are
$3\times10^{-4}$ for TaobaoBrowse and KuaiRand-Like,
and $10^{-4}$ for AdSERP.
Models are trained on the designated training splits,
with the checkpoint minimizing validation Energy Score selected
for evaluation. Generation discretizes flow time into
32 intervals. For distributional evaluation, we sample
8 response trajectories per context on TaobaoBrowse and
128 response vectors per context on the fixed-exposure tasks.

\begin{table*}[t]
    \centering
    \caption{Main results.
    Best results are in \textbf{bold} and second-best results are
    \underline{underlined}.}
    \label{tab:main_results}
    \normalsize
    \renewcommand{\arraystretch}{1.08}
    \setlength{\tabcolsep}{7pt}
    \begin{tabular}{llccccc}
        \toprule
        Dataset & Method
        & ES $\downarrow$
        & Count W1 $\downarrow$
        & PPL $\downarrow$
        & SeqHit@1 $\uparrow$
        & AUC $\uparrow$ \\
        \midrule

        \multirow{7}{*}{TaobaoBrowse}
        & KuaiSim
        & 0.11066 $\pm$ 0.00086
        & 0.3675 $\pm$ 0.0043
        & 2.1249 $\pm$ 0.0019
        & 0.0430 $\pm$ 0.0009
        & \underline{0.6496 $\pm$ 0.0078} \\
        & RL4RS-DIEN
        & 0.11085 $\pm$ 0.00075
        & 0.3748 $\pm$ 0.0039
        & \underline{2.1099 $\pm$ 0.0015}
        & \underline{0.0452 $\pm$ 0.0013}
        & \textbf{0.6592 $\pm$ 0.0046} \\
        & UserSim
        & 0.11027 $\pm$ 0.01177
        & \underline{0.2595 $\pm$ 0.1074}
        & 2.2962 $\pm$ 0.0595
        & 0.0355 $\pm$ 0.0078
        & 0.5370 $\pm$ 0.0080 \\
        & UBM
        & 0.10864 $\pm$ 0.00088
        & 0.3283 $\pm$ 0.0065
        & 2.1931 $\pm$ 0.0023
        & 0.0437 $\pm$ 0.0014
        & 0.6262 $\pm$ 0.0042 \\
        & DBN
        & \underline{0.10847 $\pm$ 0.00088}
        & 0.3364 $\pm$ 0.0042
        & 2.1975 $\pm$ 0.0023
        & 0.0428 $\pm$ 0.0013
        & 0.6189 $\pm$ 0.0114 \\
        & CaseRec
        & 0.11172 $\pm$ 0.00635
        & 0.3888 $\pm$ 0.0881
        & 2.2068 $\pm$ 0.0260
        & 0.0440 $\pm$ 0.0071
        & 0.6188 $\pm$ 0.0088 \\
        & BrowseFlow
        & \textbf{0.07264 $\pm$ 0.00027}
        & \textbf{0.0697 $\pm$ 0.0051}
        & \textbf{2.0906 $\pm$ 0.0063}
        & \textbf{0.0544 $\pm$ 0.0013}
        & 0.6126 $\pm$ 0.0071 \\
        \midrule

        \multirow{7}{*}{KuaiRand-Like}
        & KuaiSim
        & 0.02775 $\pm$ 0.00003
        & 0.0419 $\pm$ 0.0082
        & 1.0502 $\pm$ 0.0002
        & 0.8545 $\pm$ 0.0037
        & 0.8657 $\pm$ 0.0038 \\
        & RL4RS-DIEN
        & 0.02782 $\pm$ 0.00014
        & 0.0462 $\pm$ 0.0091
        & 1.0518 $\pm$ 0.0004
        & 0.8500 $\pm$ 0.0024
        & 0.8634 $\pm$ 0.0020 \\
        & UserSim
        & \underline{0.02736 $\pm$ 0.00009}
        & 0.0515 $\pm$ 0.0138
        & \textbf{1.0491 $\pm$ 0.0002}
        & 0.8554 $\pm$ 0.0037
        & 0.8785 $\pm$ 0.0050 \\
        & UBM
        & 0.02743 $\pm$ 0.00001
        & 0.0569 $\pm$ 0.0025
        & 1.0497 $\pm$ 0.0001
        & 0.8550 $\pm$ 0.0046
        & \textbf{0.8873 $\pm$ 0.0017} \\
        & DBN
        & 0.02846 $\pm$ 0.00003
        & 0.1017 $\pm$ 0.0009
        & 1.0509 $\pm$ 0.0000
        & 0.8513 $\pm$ 0.0031
        & 0.8621 $\pm$ 0.0027 \\
        & CaseRec
        & 0.02864 $\pm$ 0.00012
        & \underline{0.0363 $\pm$ 0.0153}
        & 1.0533 $\pm$ 0.0008
        & \textbf{0.8622 $\pm$ 0.0063}
        & 0.8245 $\pm$ 0.0094 \\
        & BrowseFlow
        & \textbf{0.02703 $\pm$ 0.00007}
        & \textbf{0.0207 $\pm$ 0.0039}
        & \underline{1.0495 $\pm$ 0.0004}
        & \underline{0.8598 $\pm$ 0.0031}
        & \underline{0.8853 $\pm$ 0.0057} \\
        \midrule

        \multirow{7}{*}{AdSERP}
        & KuaiSim
        & 0.17554 $\pm$ 0.00079
        & 0.5758 $\pm$ 0.0074
        & 1.1991 $\pm$ 0.0009
        & 0.1052 $\pm$ 0.0027
        & 0.7945 $\pm$ 0.0012 \\
        & RL4RS-DIEN
        & \underline{0.15218 $\pm$ 0.00059}
        & 0.6332 $\pm$ 0.0102
        & \underline{1.1719 $\pm$ 0.0006}
        & \underline{0.1384 $\pm$ 0.0058}
        & \textbf{0.8727 $\pm$ 0.0020} \\
        & UserSim
        & 0.16912 $\pm$ 0.00103
        & \underline{0.4626 $\pm$ 0.0496}
        & 1.1906 $\pm$ 0.0023
        & 0.1314 $\pm$ 0.0038
        & 0.8058 $\pm$ 0.0034 \\
        & UBM
        & 0.17479 $\pm$ 0.00055
        & 0.8696 $\pm$ 0.0117
        & 1.2011 $\pm$ 0.0005
        & 0.0909 $\pm$ 0.0008
        & 0.8062 $\pm$ 0.0027 \\
        & DBN
        & 0.20187 $\pm$ 0.00110
        & 0.5872 $\pm$ 0.0195
        & 1.4210 $\pm$ 0.0107
        & 0.0533 $\pm$ 0.0056
        & 0.6029 $\pm$ 0.0087 \\
        & CaseRec
        & 0.17822 $\pm$ 0.00160
        & 0.5565 $\pm$ 0.0279
        & 1.2016 $\pm$ 0.0013
        & 0.1096 $\pm$ 0.0012
        & 0.7713 $\pm$ 0.0042 \\
        & BrowseFlow
        & \textbf{0.14271 $\pm$ 0.00243}
        & \textbf{0.1215 $\pm$ 0.0170}
        & \textbf{1.1344 $\pm$ 0.0013}
        & \textbf{0.2216 $\pm$ 0.0094}
        & \underline{0.8553 $\pm$ 0.0046} \\
        \bottomrule
    \end{tabular}
\end{table*}

\subsection{Results}
\label{sec:results}

\textbf{TaobaoBrowse.}
BrowseFlow achieves the strongest distributional performance,
leading in both ES and Count W1. These results indicate
better agreement with complete browsing response distributions
and closer alignment of key event count distributions.
It also obtains the best PPL and SeqHit@1, demonstrating
improved probabilistic fit and exact response matching.
Its AUC, however, remains below that of several local
response predictors. While these methods directly estimate
click feedback at individual browsing states, BrowseFlow
models the conditional distribution of complete responses
through browsing extent and key event generation.
This distinction highlights the different objectives
of local click discrimination and stochastic response
modeling.

\textbf{KuaiRand-Like.}
BrowseFlow maintains the lowest ES and Count W1 on
KuaiRand-Like, with particularly pronounced improvements
in key event count alignment. Unlike TaobaoBrowse, this
benchmark provides fixed exposure sequences and evaluates
the generation of likes as key events. The smaller
improvement in ES indicates a more modest gain in overall
response fidelity, whereas the advantage in event count
alignment remains substantial. The performance of local
prediction remains competitive across all three metrics,
demonstrating the effectiveness of the proposed formulation
in fixed exposures.

\textbf{AdSERP.}
BrowseFlow achieves the best ES and Count W1,
alongside leading performance in probabilistic fit
and exact response matching. It also maintains
competitive local discrimination. AdSERP captures interactions with search results,
providing a distinct exposure setting. Consistent
improvements in distributional metrics further support
the applicability of stochastic key event generation
across different interaction environments.

Across all three datasets, BrowseFlow consistently achieves
the best ES and Count W1.
The improvements in ES indicate better agreement with
stochastic response distributions, while the reductions
in Count W1 demonstrate closer alignment of aggregate
key event count distributions.
These results also highlight that strong local prediction
performance does not necessarily imply accurate modeling
of the conditional browsing response distribution.

\begin{figure*}[t]
    \centering
    \begin{minipage}[t]{0.33\textwidth}
        \centering
        \includegraphics[width=\linewidth]{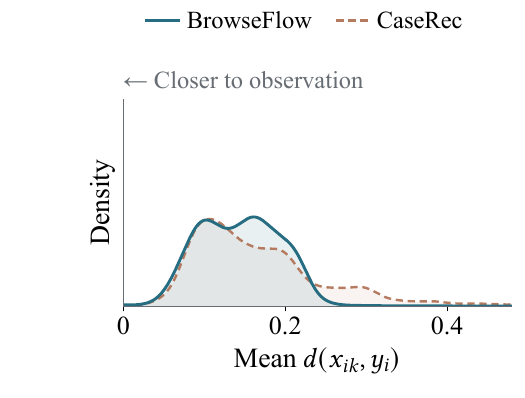}
        \par\vspace{2pt}
        {\footnotesize (a) Distance to observation\par}
    \end{minipage}\hfill
    \begin{minipage}[t]{0.33\textwidth}
        \centering
        \includegraphics[width=\linewidth]{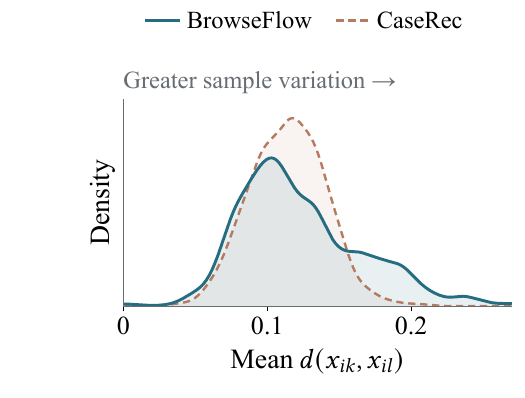}
        \par\vspace{2pt}
        {\footnotesize (b) Distance between samples\par}
    \end{minipage}\hfill
    \begin{minipage}[t]{0.33\textwidth}
        \centering
        \includegraphics[width=\linewidth]{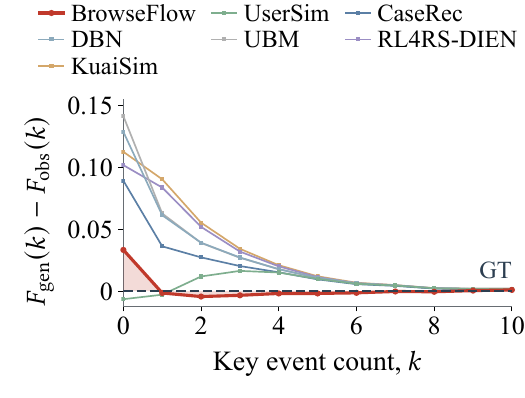}
        \par\vspace{2pt}
        {\footnotesize (c) Key event count CDF deviation\par}
    \end{minipage}
    \caption{\textbf{Distributional analysis on TaobaoBrowse.}
    (a) Distribution of mean distances between generated
    and observed responses, $d(x_{ik},y_i)$, across contexts.
    (b) Distribution of mean pairwise distances between
    distinct generated responses, $d(x_{ik},x_{il})$,
    across contexts. 
    (c) Signed differences between generated and observed
    key event count cumulative distribution functions (CDFs),
    $F_{\mathrm{gen}}(k)-F_{\mathrm{obs}}(k)$,
    for all seven methods.}

    \label{fig:distributional_analysis}
\end{figure*}

\subsection{Ablation Study}
\label{sec:ablation}

We investigate the effects of response representation and
generation mechanism on BrowseFlow. Table~\ref{tab:ablation}
compares dense action sequences and sparse key event
representations under autoregressive generation and two
Flow Matching variants.

\begin{figure*}[t]
    \centering
    \begin{minipage}[t]{0.45\textwidth}
        \centering
        \includegraphics[width=\linewidth]{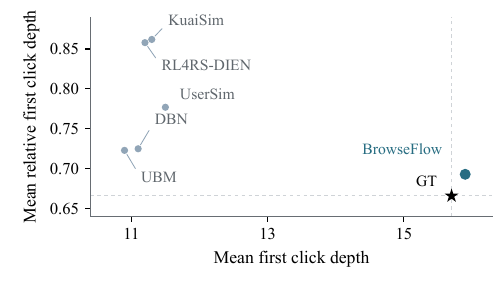}
        \par\vspace{2pt}
        {\footnotesize (a) First click placement\par}
    \end{minipage}\hfill
    \begin{minipage}[t]{0.45\textwidth}
        \centering
        \includegraphics[width=\linewidth]{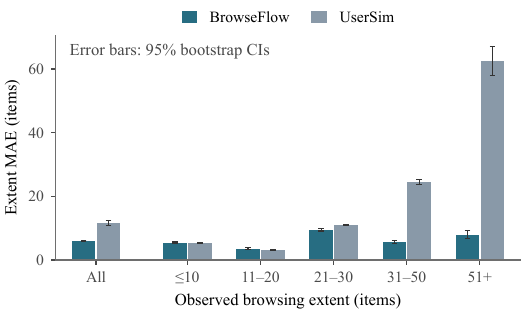}
        \par\vspace{2pt}
        {\footnotesize (b) Browsing extent recovery\par}
    \end{minipage}
        
    \caption{\textbf{Behavioral structure analysis on TaobaoBrowse.}
    (a) Mean first click placement in absolute and normalized
    browsing depth, compared with observed responses.
    (b) Browsing extent MAE grouped by observed depth,
    comparing BrowseFlow with UserSim. Lower MAE indicates
    more accurate extent recovery.}
    \label{fig:behavioral_structure}
\end{figure*}

Under autoregressive generation, sparse representation
improves both ES and Count W1. By summarizing
frequent progression actions through browsing extent,
the sparse representation focuses generation on key events.
These improvements suggest that separating browsing
progression from key event generation benefits both
response fidelity and event count alignment.

Flow Matching further improves both ES and Count W1
under dense and sparse representations, with particularly
large gains in key event count alignment.
Through conditional rate matching, it learns stochastic
transition dynamics over partial response sequences,
directly modeling event generation within the response
structure. These improvements highlight the importance
of the generation mechanism in capturing stochastic
response distributions, beyond the choice of
representation alone.

The effect of insertion depends on the response
representation. Append-only generation follows browsing
order, whereas insertion allows events to be placed between
previously generated events. In the sparse key event space,
insertion further improves Count W1, with a relatively small
change in ES. In contrast, dense insertion degrades both
metrics relative to append-only generation, with a particularly
large deterioration in key event count alignment.
These results highlight the complementary roles
of sparse representation and insertion-based Flow Matching
in generating structured browsing responses under
explicit exposure constraints.

\begin{table}[t]
    \centering

    \caption{Component ablation on TaobaoBrowse.
    Sparse denotes the browsing extent and key event
    representation. FM denotes Flow Matching, with
    append-only or insertion generation.
    The final row corresponds to BrowseFlow.}

    \label{tab:ablation}
    \normalsize
    \setlength{\tabcolsep}{5pt}
    \renewcommand{\arraystretch}{1.12}
    \begin{tabular}{@{}ccc@{\hspace{7pt}\vrule width 0.5pt\hspace{7pt}}cc@{}}
        \toprule
        Sparse & FM & Insertion
        & ES $\downarrow$ & Count W1 $\downarrow$ \\
        \midrule

        \abNo & \abNo & \abNo
        & 0.07815 & 0.45158 \\

        \abYes & \abNo & \abNo
        & 0.07444 & 0.38071 \\

        \addlinespace[2pt]

        \abNo & \abYes & \abNo
        & 0.07392 & 0.12666 \\

        \abYes & \abYes & \abNo
        & \underline{0.07279} & \underline{0.09407} \\

        \abNo & \abYes & \abYes
        & 0.07786 & 2.56907 \\

        \midrule

        \abYes & \abYes & \abYes
        & \textbf{0.07264} & \textbf{0.06969} \\

        \bottomrule
    \end{tabular}
\end{table}

\subsection{Distributional Analysis}
\label{sec:distributional_analysis}

\textbf{Response fidelity and generated variability.}
Figure~\ref{fig:distributional_analysis}(a) examines response
fidelity through the mean distance between generated and
observed responses. Compared with CaseRec, BrowseFlow
produces smaller sample-to-observation distances and a
shorter upper tail, indicating fewer contexts with large
deviations from observed behavior.
Figure~\ref{fig:distributional_analysis}(b) examines generated
variability through the mean pairwise distance between
responses generated under the same context. BrowseFlow's
sample-to-sample distances remain nonzero and exhibit a
more pronounced upper tail, showing that repeated
generations retain behavioral variation.
Energy Score evaluates these two aspects jointly,
reflecting the balance between observation fidelity
and generated variability rather than favoring
either in isolation.

\textbf{Key event count alignment.}
Count W1 measures the agreement between observed and generated
key event count distributions. Figure~\ref{fig:distributional_analysis}(c)
provides a more detailed comparison through signed CDF
deviations, where the zero line indicates exact agreement
in cumulative probability at each event count. Positive
deviations indicate excessive probability assigned to
responses with at most that many key events.
In particular, the deviation at zero directly reflects
the mismatch in the probability of browsing responses
without any key events.
BrowseFlow remains closer to zero across much of the
count range, although some baselines achieve smaller
deviations at individual counts. 
This pattern is consistent with its leading Count W1
and indicates more accurate recovery of the observed
key event count distribution.


\subsection{Behavioral Structure Analysis}
\label{sec:behavioral_analysis}

\textbf{First click placement.}
Figure~\ref{fig:behavioral_structure}(a) compares the mean
first click positions in absolute and normalized browsing
depth. Absolute depth reflects where the first click occurs
along the exposure sequence, while normalized position
accounts for differences in browsing extent. BrowseFlow
more closely matches the observed placement on both
measures, whereas competing methods generally place
the first click at shallower absolute depths.
The agreement in normalized position further suggests
that this improvement is not explained solely by
differences in generated browsing extent.

\textbf{Browsing extent recovery.}
Figure~\ref{fig:behavioral_structure}(b) compares browsing
extent errors across ranges defined by observed depth.
BrowseFlow and UserSim perform similarly on shorter
sessions, but their errors diverge as browsing depth
increases. BrowseFlow maintains substantially lower
errors on longer sessions, with the largest differences
appearing in the deepest ranges. UserSim tends to
underestimate the extent of these sessions, whereas
BrowseFlow more closely recovers their browsing lengths.


\section{Conclusion}
In this work, we studied generative user simulation under two characteristics of browsing behavior:
\textbf{Sparse Key Events} and \textbf{Stochastic Responses}.
To address these challenges, we proposed BrowseFlow, which
models the conditional distribution of browsing responses
through probabilistic extent prediction and discrete flow
matching over key events. This formulation separates
browsing progression from sparse key event generation
and models event counts and locations under exposure constraints.
We also introduced TaobaoBrowse, a benchmark pairing real
recommendation contexts with fine-grained browsing trajectories.

Experiments on TaobaoBrowse and two external benchmarks
demonstrate consistent improvements in Energy Score and
Count W1, reflecting better agreement with 
response distributions.
Further analyses reveal improved observation fidelity
while preserving variation across generated responses,
alongside more accurate browsing extent recovery and
first click placement. 
These findings highlight a central principle of faithful
user simulation: model the key events that matter,
and capture the many ways users may browse.

\endgroup
\bibliographystyle{ACM-Reference-Format}
\bibliography{references}


\providecommand{\benchYes}{\textcolor{green!48!black}{\ensuremath{\checkmark}}}
\providecommand{\benchNo}{\textcolor{red!70!black}{\ensuremath{\times}}}

\appendix


\section{Benchmark and Dataset Details}
\label{app:data_processing}

\subsection{TaobaoBrowse Construction}

Browsing extent follows the order of factual exposures and is
distinguished from the original candidate position. Unexposed
candidates are not labeled as negative feedback.

After the observed prefix, 56.76\% of response suffixes contain
no key event and 25.70\% contain exactly one. On TaobaoBrowse,
key events correspond to clicks. Figure~\ref{fig:taobao_statistics}
shows the distributions of browsing extent and key event count,
together with key event rates and exposure support across
browsing depths.

\begin{figure*}[!t]
    \centering
    \begin{subfigure}[t]{0.33\textwidth}
        \centering
        \includegraphics[width=\linewidth]{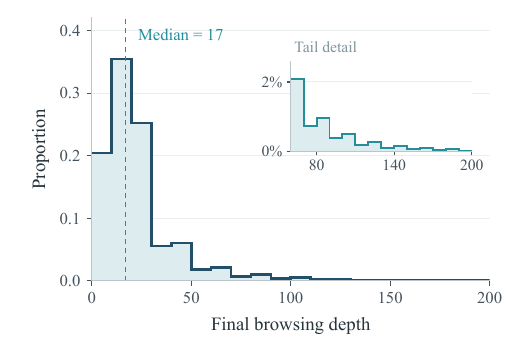}
        \caption{Browsing extent.}
        \label{fig:taobao_extent}
    \end{subfigure}
    \hfill
    \begin{subfigure}[t]{0.33\textwidth}
        \centering
        \includegraphics[width=\linewidth]{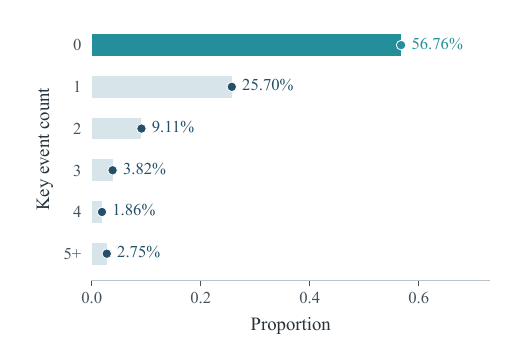}
        \caption{Key event count.}
        \label{fig:taobao_event_count}
    \end{subfigure}
    \begin{subfigure}[t]{0.33\textwidth}
        \centering
        \includegraphics[width=\linewidth]{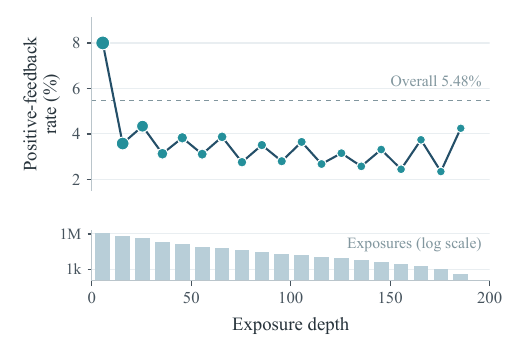}
        \caption{Key event rate and exposure support.}
        \label{fig:taobao_feedback_depth}
    \end{subfigure}
    \caption{Empirical browsing statistics of TaobaoBrowse.
    Key event rate denotes the fraction of valid exposures
    with a click.}
    \Description{The panels show the empirical distributions
    of browsing extent and post-prefix key event counts,
    followed by key event rates and exposure counts
    across browsing depths.}
    \label{fig:taobao_statistics}
\end{figure*}

\subsection{Comparison with Existing Benchmarks}

Table~\ref{tab:benchmark_comparison} compares five documented
properties relevant to browsing response simulation: production
provenance, factual within-session exposure progression,
clicked-target occurrence, and user-profile and item context.
Exposure progression refers to recorded browsing or viewport
advancement, rather than merely time-ordered interactions.
TaobaoBrowse combines production recommendation context with
observed progression and item-directed key events. A cross
indicates that a property is not established at the required
granularity, rather than necessarily absent from the original
dataset.

\begin{table}[H]
\centering
\caption{Documented benchmark capabilities for browsing
response simulation. \benchYes: verified;
\benchNo: not established at the required granularity.}
\label{tab:benchmark_comparison}
\normalsize
\setlength{\tabcolsep}{3pt}
\renewcommand{\arraystretch}{1.08}
\begin{tabular}{@{}lccccc@{}}
\toprule
Benchmark & Prod. & \shortstack{Exposure\\trace}
& \shortstack{Click\\target} & Profile & Item \\
\midrule
KuaiRand~\cite{gao2022kuairand}
& \benchYes & \benchNo & \benchYes & \benchYes & \benchYes \\
RecFlow~\cite{liu2025recflow}
& \benchYes & \benchNo & \benchYes & \benchYes & \benchYes \\
RL4RS~\cite{wang2023rl4rs}
& \benchYes & \benchNo & \benchYes & \benchYes & \benchYes \\
OPeRA~\cite{wang2026opera}
& \benchNo & \benchYes & \benchYes & \benchYes & \benchYes \\
AdSERP~\cite{latifzadeh2025versatile}
& \benchNo & \benchYes & \benchYes & \benchNo & \benchYes \\
\midrule
\textbf{TaobaoBrowse}
& \benchYes & \benchYes & \benchYes & \benchYes & \benchYes \\
\bottomrule
\end{tabular}
\end{table}

\subsection{External Evaluation Tasks}

\textbf{KuaiRand-Like.}
Blocks are constructed from KuaiRand-1K~\cite{gao2022kuairand}.
Records are ordered by user, date, time, and stable tie-breaking
fields. Runs are separated by date/tab changes or gaps exceeding
30 minutes, then partitioned into blocks of at most 32 exposures.
The target is \texttt{is\_like}, with likes treated as key events.
Valid positions define the binary event response vector.
The splits contain 100,000/5,000/10,000 blocks.

\textbf{AdSERP.}
Of 2,776 recorded search trials from 47 participants
\cite{latifzadeh2025versatile}, 2,306 retain reconstructible
exposure geometry and click-to-result mapping. A result is
exposed when at least half of its card is visible; re-entry
defines a new exposure occurrence. Clicks are treated as
key events. The participant-disjoint training/validation/test
split is 1,613/355/338. Both external tasks generate key
event responses conditioned on fixed observed exposure
sequences, rather than predicting natural session termination.



\section{Evaluation Details}
\label{app:metrics}

\subsection{Distributional Metrics}

TaobaoBrowse uses all 20,000 test contexts and $K=8$ generated
responses per context; the fixed-exposure tasks use $K=128$.
Energy Score (ES) is computed per context using the
U-statistic sample-pair term and then averaged across contexts.

For TaobaoBrowse, the response encoding $\xi(\tau)$ in
Eq.~\eqref{eq:distance_taobao} contains 400 ordered positions
with joint action and clicked-target categories, padded as
needed, and one terminal coordinate. A click and its target
form a joint category, while progression actions do not
carry clicked targets. Thus, $B=401$ in the response distance.

For the fixed-exposure tasks, the binary distance in
Eq.~\eqref{eq:distance_binary} is computed only over valid
exposures. Equivalently, a valid-exposure mask $m_j$ excludes
padded positions, with $L=\sum_j m_j$.

The canonical TaobaoBrowse response encoding admits an
equivalent Euclidean representation. For concatenated one-hot
vectors $\psi(\tau)$,
\begin{equation}
 d_{\mathrm{TB}}(\tau,\tau')=
 \frac{\|\psi(\tau)-\psi(\tau')\|_2}{\sqrt{802}}.
 \label{eq:app_es_embedding}
\end{equation}
The encoding is injective on valid recorded responses.
The binary response distance $d_{\mathrm{bin}}$ is likewise
a positive rescaling of Euclidean distance over the valid
exposure positions. Both distances therefore induce
strictly proper population Energy Scores on their respective
response spaces, as shown below.

\paragraph{Strict propriety of the Energy Score.}
Let $P$ be the predictive distribution and $Q$ the true
conditional response distribution on either response space.
For a distance $d$ induced by a Euclidean embedding, the
population Energy Score is
\begin{equation}
\mathrm{ES}(P,Q)
=
\mathbb{E}_{X\sim P,\,Y\sim Q}[d(X,Y)]
-\frac{1}{2}\mathbb{E}_{X,X'\sim P}[d(X,X')],
\label{eq:app_population_es}
\end{equation}
where $X$ and $X'$ are independent draws from $P$ and
$Y$ is an independent draw from $Q$. Under the stated
distance conditions, the Energy Score is strictly proper:
its expected value is uniquely minimized when $P=Q$
\cite{gneiting2007strictly}.

For TaobaoBrowse, injectivity of $\psi$ ensures that distinct
valid responses have distinct embeddings. For the fixed-exposure
tasks, each binary response is represented by its valid
exposure coordinates, so distinct response vectors also have
distinct embeddings. In both cases, the distance is a positive
multiple of Euclidean distance. Hence the population Energy
Score distinguishes different predictive distributions and
is uniquely minimized by the true conditional response
distribution.

Count W1 compares the aggregate empirical distributions
of observed and generated key event counts, rather than
averaging context-wise Wasserstein distances. Counts refer
to suffix clicks on TaobaoBrowse, likes on KuaiRand-Like,
and clicks on AdSERP, restricted to valid exposures.
For evaluations with multiple generation seeds, Count W1
is computed per seed before averaging.

\subsection{Likelihood and Local Metrics}

On TaobaoBrowse, PPL evaluates the observed next joint
response, including the clicked target when applicable,
rather than the likelihood of the complete response suffix:
\begin{equation}
 \mathrm{PPL}=
 \exp\!\left[-\frac1Q\sum_{i=1}^{Q}
 \log p(r_i\mid c_i)\right],
 \quad Q=20{,}000.
 \label{eq:app_taobao_ppl}
\end{equation}

BrowseFlow obtains local probabilities by marginalizing
over possible browsing extents and future insertion paths,
using $G=32$ flow-time intervals and $K=64$ importance
samples for nontrivial marginals. External PPL uses joint
binary response likelihood normalized by the total valid
exposures in the evaluated panel.

On TaobaoBrowse, SeqHit@1 requires exact matching of
the first sampled ordered response suffix, including
clicked targets and the terminal state. For the external
tasks, marginal key event probabilities estimated from
$K=128$ draws are thresholded at $0.5$ before exact
vector matching.

TaobaoBrowse AUC pools 1,024 prefix queries and uses the
fraction of $K=64$ first-event samples that are clicks
as the prediction score. On the fixed-exposure tasks,
AUC evaluates local key event discrimination over valid
exposures.

\subsection{Distributional and Behavioral Diagnostics}

For the ES component plots, the mean generated-to-observed
distance and the mean distance between distinct generated
samples are computed for each evaluated context and
averaged across available seeds before plotting their
context-level distributions. The pairwise distance is
shown without the factor of one half used in ES.

For the key event count CDF plot, each model's event
count probabilities are averaged across seeds before
cumulative summation and subtraction of the common
observed CDF. The displayed range covers counts 0--10,
while the full count support is retained. The signed
CDF deviation is a diagnostic; its absolute area is
not identified with the mean of seed-level Count W1 values.

For behavioral analysis, browsing extent MAE is grouped
by observed extent. UserSim is selected as the strongest
baseline by overall extent MAE, rather than separately
within each group. First click placement is evaluated
in absolute depth and relative to browsing extent,
conditional on responses containing a first click.


\section{Baseline Adaptations}
\label{app:baselines}

\subsection{Common Response Interface}

On TaobaoBrowse, baselines sequentially generate progression,
click, and termination actions. For clicks, they also select
a legal, visible, previously unclicked target. For baselines
with learned action and target prediction heads, the task
loss combines next-action and conditional clicked-target
cross-entropy:
\begin{equation}
\begin{aligned}
\mathcal L_{\mathrm{task}}
&=-\frac1T\sum_{t=1}^{T}
 \log p_\phi(a_t^*\mid h_t,X_t)\\
&\quad-\frac1{\max(1,C)}
 \sum_{t:a_t^*=\mathrm{CLICK}}
 \log p_\phi(v_t^*\mid a_t^*,h_t,X_t).
\end{aligned}
\label{eq:app_baseline_task_loss}
\end{equation}

Here $T$ and $C$ count supervised actions and clicks,
respectively. Native response representations and
legal-item scores condition these predictions; CaseRec
uses its own causal prediction head. For the fixed-exposure
tasks, methods generate binary key event feedback over
the recorded exposure sequence.

\subsection{Method-Specific Adaptations}

\noindent\textbf{UBM~\cite{dupret2008user}.}
The native click score factorizes as
$q_v=\alpha_v\gamma_{v,l}$, with item attractiveness
$\alpha_v$ and examination $\gamma_{v,l}$ conditioned
on rank and the previous clicked position $l$.
Generated clicks update $l$.

\par\noindent\textbf{DBN~\cite{chapelle2009dynamic}.}
Attractiveness, post-click satisfaction, and latent
examination continuation are propagated through an
active/stopped-state recursion. The latent stopped
state is not identified with observed session termination;
fixed-exposure likelihoods marginalize it by
forward recursion.

\par\noindent\textbf{RL4RS-DIEN~
\cite{wang2023rl4rs,zhou2019deep}.}
GRU history encoding, candidate attention, and AUGRU
interest evolution supply response features for the
task heads. On fixed exposure sequences, feedback is
sampled from history-conditioned Bernoulli probabilities.

\par\noindent\textbf{UserSim~\cite{zhao2021usersim}.}
The recurrent generator/discriminator retains its
adversarial, response-supervision, and reconstruction
objectives alongside $\mathcal L_{\mathrm{task}}$.
Generator and discriminator are pretrained before
joint updates; sampled feedback updates the recurrent
state.

\par\noindent\textbf{KuaiSim~\cite{zhao2023kuaisim}.}
Its history-conditioned response state drives task
predictions. On fixed exposure plans, feedback is
sampled within recorded groups and used to update
a recurrent state between groups.

\par\noindent\textbf{CaseRec~\cite{zhao2025improving}.}
A causal Transformer processes alternating state and
action-context tokens, using a recurrent summary of
completed events. The adapter predicts actions and legal
targets from this state, using up to ten completed events
per state.


\section{BrowseFlow Algorithm and Implementation}
\label{app:implementation}

\subsection{BrowseFlow Inference}
\label{app:algorithms}

On TaobaoBrowse, BrowseFlow first samples browsing extent
from the complete recommendation context and re-encodes
the session within the selected region. An insertion flow
then generates an ordered sequence of key events, which
is decoded into a complete browsing response.
Algorithm~\ref{alg:bf_sampling} summarizes this procedure.
For the fixed-exposure tasks, the exposure horizon is
given, and only key event responses are generated.

\begin{algorithm}[H]
\caption{BrowseFlow Inference}
\label{alg:bf_sampling}
\begin{algorithmic}[1]
\Require Context $X$, flow intervals $G$
\Ensure Generated response $\hat{\tau}$

\State $D\sim\pi_\eta(\cdot\mid X)$
\State $c_D\gets\textsc{EventEncode}(X,D)$
\State $A\gets\varnothing$
\For{$j=0,\ldots,G-1$}
    \State $s\gets(j+\frac12)/G,\quad\delta\gets1/G$
    \While{$\delta>0$}
        \State $\lambda\gets\lambda_\theta(A,s,D,c_D)$
        \If{$\Omega(A,D)=\varnothing$ \textbf{or} $\lambda=0$}
            \State \textbf{break}
        \EndIf
        \State $w\sim\operatorname{Exp}(\lambda)$
        \If{$w\geq\delta$}
            \State \textbf{break}
        \EndIf
        \State $z\sim\pi_\theta(\cdot\mid A,s,D,c_D)$
        \State $A\gets A\oplus z,\quad\delta\gets\delta-w$
    \EndWhile
\EndFor
\State \Return $\operatorname{Decode}_X(D,A)$
\end{algorithmic}
\end{algorithm}

The sampled extent and its encoded context remain fixed
throughout event generation. Within each flow interval,
exponential waiting times determine successive insertions.
After each insertion, transition rates and legal operations
are updated. The operation distribution follows the gap,
depth, and mark factorization in Section~\ref{sec:event_flow}.
The flow permits an empty key event sequence. Finally,
decoding reconstructs progression actions and session
termination on TaobaoBrowse, while the external tasks
retain the observed exposure horizon.

\subsection{Conditional Rate Matching}
\label{app:theory}

Fix the context and extent, suppressing them in the notation.
For the random-retention path, let
$r_s^*(z\mid A,\mathcal E^*)$ denote the conditional target
insertion rate and
$\bar r_s(z\mid A)=
\mathbb E[r_s^*(z\mid A,\mathcal E^*)\mid A,s]$
its marginal rate.
For nonnegative rate vectors on the same finite support,
write
\[
D_{\mathrm{GKL}}(a,b)
=\sum_z\left[
a_z\log\frac{a_z}{b_z}-a_z+b_z
\right].
\]

\begin{proposition}[Conditional rate matching]
\label{prop:conditional_matching}
For a state of positive marginal probability and
finite divergences,
\begin{equation}
\mathbb E[D_{\mathrm{GKL}}(r_s^*,r_\theta)\mid A,s]
=D_{\mathrm{GKL}}(\bar r_s,r_\theta)+C_s(A),
\label{eq:app_gkl_equivalence}
\end{equation}
where $C_s(A)$ is independent of model parameters.
\end{proposition}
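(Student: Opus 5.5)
The plan is to expand the generalized KL divergence term by term and separate the pieces that involve $r_\theta$ from those that do not. The key structural fact is that the divergence is affine in the model-dependent quantities once the target rate is fixed. Concretely, for each operation $z$ in the finite admissible support $\Omega(A,D)$, write
\[
D_{\mathrm{GKL}}(a,b)=\sum_z\bigl[a_z\log a_z - a_z\bigr]+\sum_z\bigl[b_z - a_z\log b_z\bigr],
\]
with the convention $0\log 0=0$. The first sum depends only on the target $a$; the second is linear in $a$ for fixed $b$.

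First, I will apply this decomposition with $a=r_s^*(\cdot\mid A,\mathcal E^*)$ and $b=r_\theta(\cdot\mid\xi)$. Conditioned on $(A,s)$, the model rate $r_\theta$ is a deterministic function of $(A,s)$ and the fixed context. It does not depend on the unobserved target $\mathcal E^*$.

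Second, I will take the conditional expectation over the posterior of $\mathcal E^*$ given $(A,s)$. This posterior is well defined because $A$ has positive marginal probability. Since the support is finite, linearity of expectation gives
\[
\mathbb E\Bigl[\sum_z (b_z - a_z\log b_z)\Bigm| A,s\Bigr]=\sum_z\bigl(b_z-\bar r_s(z\mid A)\log b_z\bigr).
\]
This uses only the definition of $\bar r_s$ in Eq.~\eqref{eq:marginal_rate}. Under the random-retention path the conditional rates take only the values $0$ or $1/(1-s)$ for $s<1$, and the posterior ranges over the finitely many sequences consistent with $A$, so all expectations are finite.

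Third, I will add and subtract $\sum_z[\bar r_s\log\bar r_s-\bar r_s]$ to recognize $D_{\mathrm{GKL}}(\bar r_s,r_\theta)$. This leaves
\[
C_s(A)=\mathbb E\Bigl[\sum_z (r_s^*\log r_s^* - r_s^*)\Bigm| A,s\Bigr]-\sum_z(\bar r_s\log\bar r_s-\bar r_s).
\]
The linear terms $-r_s^*$ and $-\bar r_s$ cancel in expectation, so this reduces to
\[
C_s(A)=\mathbb E\Bigl[\sum_z r_s^*\log r_s^*\Bigm| A,s\Bigr]-\sum_z\bar r_s\log\bar r_s .
\]
This depends only on the data path, not on $\theta$; by Jensen's inequality it is nonnegative, though that is not needed.

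The main obstacle is handling zeros and infinities carefully rather than any real computation. If $r_\theta(z)=0$ at some $z$ where $\bar r_s(z)>0$, both sides are infinite. This case is exactly what the hypothesis of finite divergences excludes. Conversely, $\bar r_s(z)=0$ forces $r_s^*(z)=0$ almost surely under the posterior, so $0\log 0$ terms are consistent on both sides. I will state these conventions explicitly at the outset and then run the linear-expectation argument above.
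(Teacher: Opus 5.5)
Your proposal is correct and follows essentially the same route as the paper: expand $D_{\mathrm{GKL}}$, then use linearity of the posterior expectation to replace $r_s^*$ by $\bar r_s$ in the $r_\theta$-dependent part $\sum_z[r_\theta(z)-r_s^*(z)\log r_\theta(z)]$, and collect the remainder $C_s(A)=\mathbb E[\sum_z r_s^*\log r_s^*\mid A,s]-\sum_z\bar r_s\log\bar r_s$, which is exactly the constant the paper obtains. Your explicit handling of the $0\log 0$ convention, of the infinite case, and of the fact that $r_\theta$ does not depend on $\mathcal E^*$ once context and extent are fixed fills in details the paper leaves implicit (the claim that $r_s^*$ takes only the values $0$ and $1/(1-s)$ requires distinct missing events to give distinct insertions, which holds when marks cannot repeat, but finiteness only needs the bound $r_s^*\le M/(1-s)$).
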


\begin{proof}
Expanding the generalized KL divergence, the terms
depending on $r_\theta$ are
$-\sum_z\mathbb E[r_s^*(z)]\log r_\theta(z)
+\sum_zr_\theta(z)$,
identical to those of the marginal divergence.
The remaining difference is
\[
\sum_z\left\{
\mathbb E[r_s^*(z)\log r_s^*(z)]
-\bar r_s(z)\log\bar r_s(z)
\right\},
\]
which is independent of $\theta$.
The common positive weighting by $1-s$ preserves
this equivalence.
\end{proof}

\begin{corollary}[Expected remaining events]
\label{cor:remaining_events}
If $R$ is the number of target key events absent from $A$,
the marginal total intensity satisfies
\begin{equation}
(1-s)\bar\lambda_s(A)=\mathbb E[R\mid A,s].
\label{eq:app_remaining_events}
\end{equation}
\end{corollary}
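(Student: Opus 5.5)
The plan is to sum the marginal rate over all admissible insertions and exchange the sum with the posterior expectation that defines $\bar r_s$. By definition,
\[
\bar\lambda_s(A)=\sum_z \bar r_s(z\mid A)=\sum_z \mathbb E\!\left[r_s^*(z\mid A,\mathcal E^*)\mid A,s\right].
\]
The support $\Omega(A,D)$ is finite, since depths lie in $\{f,\ldots,N\}$ and marks range over a finite item set. The conditional rates are also nonnegative. Hence the finite sum and the conditional expectation commute, giving
\[
\bar\lambda_s(A)=\mathbb E\!\left[\sum_z r_s^*(z\mid A,\mathcal E^*)\,\Big|\,A,s\right].
\]

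The inner sum is evaluated for each fixed target $\mathcal E^*$ in the posterior support, that is, for every $\mathcal E^*$ with $p_s(A\mid\mathcal E^*)>0$, so that $A\preceq\mathcal E^*$.

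\begin{itemize}
\item By Eq.~\eqref{eq:conditional_rate}, each missing index $k\in\mathcal M(A,\mathcal E^*)$ contributes rate $1/(1-s)$ to the insertion $z_k^*(A)$.
\item Every other operation has conditional rate zero.
\item The inner sum is therefore $|\mathcal M(A,\mathcal E^*)|/(1-s)=R/(1-s)$, which is the total conditional intensity already noted after Eq.~\eqref{eq:conditional_rate}.
\end{itemize}

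One subtlety is that distinct missing indices might map to the same operation $z$. For example, two identical marks at the same depth in the same gap could both be missing. The no-repeat constraint on click targets rules this out. Even without it, the conditional rates of coinciding insertions add, so the total is still $R/(1-s)$.

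Substituting gives $\bar\lambda_s(A)=\mathbb E[R\mid A,s]/(1-s)$. For $s<1$ we multiply by $(1-s)$ to obtain Eq.~\eqref{eq:app_remaining_events}.

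The main obstacle is not the algebra but the well-definedness of the conditioning. I would restrict to states of positive marginal probability, as in Proposition~\ref{prop:conditional_matching}, so that the posterior $p_s(A\mid\mathcal E^*)p_\star(\mathcal E^*)/p_s(A)$ is well defined. I would also take $s\in[0,1)$, where the rates are finite.

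When the data distribution has infinite support, justifying the exchange of sum and expectation needs one more remark. The operation sum is finite and each term is nonnegative, so linearity, or Tonelli's theorem, applies without further assumptions.
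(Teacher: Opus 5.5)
Your proposal is correct and matches the paper's argument: each of the $R$ missing target events contributes conditional rate $1/(1-s)$, so the total conditional intensity is $R/(1-s)$, and taking the posterior expectation (commuting it with the finite sum over $\Omega(A,D)$) and multiplying by $1-s$ gives the corollary. Your extra remarks are sound refinements of conditions the paper leaves implicit: restricting to states of positive marginal probability and to $s<1$, and noting that coinciding insertions would not change the total.
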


Each missing target key event has conditional insertion
rate $1/(1-s)$. Summing these rates and taking the
posterior expectation gives the corollary.

\subsection{Model Configuration and Ablation Protocol}

BrowseFlow uses 128-dimensional hidden states, 2/2/4
Transformer layers for history, session, and partial
key events, four attention heads, FFN dimension 512,
and dropout $0.1$. We use AdamW with weight decay
$0.01$, effective batch size 128, and 5\% warmup.
Learning rates are $3\times10^{-4}$ on TaobaoBrowse
and KuaiRand-Like, and $10^{-4}$ on AdSERP.
Generation uses 32 flow-time intervals.

The ablation study compares dense action sequences
with a representation based on browsing extent and
sparse key events under autoregressive and
Flow Matching generation.
Within Flow Matching, append-only generation restricts
new events to the end of the partial sequence, whereas
insertion allows events to be placed between existing
events. These configurations isolate the effects of
response representation, generation mechanism, and
insertion operations on ES and Count W1.

\end{document}